\documentclass[letterpaper, 10 pt, conference]{ieeeconf}
\IEEEoverridecommandlockouts
\usepackage{cite}
\usepackage[compact]{titlesec}
\usepackage{amsmath,amssymb,amsfonts}
\usepackage{hyperref}
\usepackage{graphicx}
\usepackage{textcomp}
\usepackage{times}
\usepackage{graphicx}
\usepackage{mathrsfs}
\usepackage{amstext} 
\usepackage{algorithm}
\usepackage{algorithmic}
\usepackage{setspace} 
\usepackage{amssymb} 
\usepackage{mathtools} 
\usepackage{verbatim}
\usepackage{color}
\usepackage{booktabs}
\usepackage[font=small]{caption}

\renewcommand{\theenumi}{\alph{enumi}}

\newcommand{\naturals}{\mathbb{N}}
\newcommand{\real}{\mathbb{R}}

\newcommand{\Bc}{\mathcal{B}}

\newcommand{\Ec}{\mathcal{E}}
\newcommand{\Fc}{\mathcal{F}}
\newcommand{\Gc}{\mathcal{G}}
\newcommand{\Hc}{\mathcal{H}}
\newcommand{\Ic}{\mathcal{I}}
\newcommand{\Jc}{\mathcal{J}}
\newcommand{\Kc}{\mathcal{K}}

\newcommand{\Pc}{\mathcal{P}}

\newcommand{\Sc}{\mathcal{S}}

\newcommand{\Xc}{\mathcal{X}}

\newcommand{\longthmtitle}[1]{\mbox{}{\textit{(#1):}}}

\newcommand{\oprocendsymbol}{\hbox{$\square$}}
\newcommand{\oprocend}{\relax\ifmmode\else\unskip\hfill\fi\oprocendsymbol}

\newtheorem{theorem}{Theorem}[section]
\newtheorem{definition}[theorem]{Definition}

\newtheorem{remark}[theorem]{Remark}

\newtheorem{proposition}[theorem]{Proposition}

\newcommand{\marginJC}[1]{\marginpar{\color{red}\tiny\ttfamily JC: #1}}
\newcommand{\marginDS}[1]{\marginpar{\color{blue}\tiny\ttfamily DS: #1}}

\newcommand{\R}{\mathbb{R}}

\renewcommand{\baselinestretch}{0.97}
\graphicspath{{figures/}}

\usepackage{multirow}  
\usepackage{subcaption}  

\allowdisplaybreaks

\def\BibTeX{{\rm B\kern-.05em{\sc i\kern-.025em b}\kern-.08em
    T\kern-.1667em\lower.7ex\hbox{E}\kern-.125emX}}
\begin{document}
\title{\LARGE Learning the Omega-Limit Partition from Approximate
  Koopman Invariants
  \thanks{This work was supported by AFOSR Award FA9550-23-1-0740.
}}

\author{Dhruv Shah \quad Jorge Cort\'es \thanks{D. Shah and
    J. Cort\'es are with Department of Mechanical and Aerospace
    Engineering, UC San Diego, USA, {\tt\small
      \{dhshah,cortes\}@ucsd.edu}}}

\maketitle
\thispagestyle{empty}

\begin{abstract}
  Nonlinear dynamical systems with multiple omega-limit sets pose a
  fundamental challenge for global finite-dimensional Koopman modeling
  with continuous observables. A natural way to address this challenge
  is to decompose the state space into forward-invariant regions
  associated with distinct asymptotic behaviors and to learn a
  separate Koopman model for each region. This paper studies how to
  compute such partitions from data. We establish conditions under
  which Koopman-invariant observables, namely eigenfunctions
  associated with the unit eigenvalue, characterize the stationary
  partition induced by omega-limit sets. {Building on this
  characterization, we develop OPAL (Omega-limit Partitions from
  Approximate invariants Learned from data), an algorithm that
  extracts partitions from approximate invariant observables.} Numerical
  examples show that the recovered partitions capture the relevant
  asymptotic regions and support multiple regional linear Koopman
  models with significantly improved long-horizon prediction capabilities.
\end{abstract}

%

%

\section{Introduction}
The Koopman operator provides a linear representation of nonlinear
dynamics by lifting the evolution from the state space to a space of
observable functions. This perspective has become increasingly
influential in data-driven approaches to dynamical systems because it
offers a principled way to combine nonlinear models with linear
systems tools. Dynamic mode decomposition (DMD), extended dynamic mode
decomposition (EDMD), kernel methods, and neural-network-based Koopman
embeddings have made Koopman-theoretic ideas computationally
accessible across fluid mechanics, robotics, control, and machine
learning
\cite{BOK:31,IM:05,MB-RM-IM:12,MOW-IGK-CWR:15,BL-JNK-SLB:18}. The
central premise is that, after a suitable change of observables,
nonlinear trajectories can be propagated, analyzed, or controlled
using finite-dimensional linear models.

However, finite-dimensional Koopman modeling has fundamental
limitations~\cite{SLB-MB-EK-JNK:22}.  Recent impossibility results
show~\cite{ZL-NO-EDS:25} that, under mild assumptions, continuous
immersions into finite-dimensional linear systems collapse distinct
omega-limit sets and therefore fail to provide globally injective
Koopman embeddings. This obstruction helps explain a recurring
empirical phenomenon: global Koopman models trained with smooth or
continuous dictionaries may blur basin boundaries, miss discontinuous
invariant coordinates, or produce long-term predictions that remain
accurate only on restricted regions of the state space.

A natural way to avoid this obstruction is to relax the requirement of
a single global finite-dimensional model and instead decompose the
state space into dynamically meaningful invariant regions. Once such
regions are known, one can learn a separate Koopman model on each
region, obtaining regional linear descriptions that respect the
distinct asymptotic behaviors of the nonlinear system. Another
advantage of this approach is that, since the state space will be
partitioned into forward invariant regions, no switching is needed to
transition between different models. Given an initial condition, the
model is determined by the partition and it is used for the entire
duration of the simulation. This idea has appeared in the literature
in several forms. Williams et al. \cite{MOW-IGK-CWR:15} use EDMD
eigenfunctions to identify the basins of attraction of the Duffing
oscillator and then applied EDMD separately on the resulting
subdomains. However, the procedure has misclassifications close to the
separatrix which compromises the quality of the learned models.
Nandanoori et al. \cite{SPN-SS-EY:20} read invariant sets off the
eigenvectors of a global Koopman operator and stitch regional Koopman
operators into a block-diagonal model via characteristic functions of
those sets, which must therefore be known in advance. We instead learn
the omega-limit partition directly from data and only then build the
regional Koopman models on it.

Regional Koopman constructions have also been used for data-driven
control, where separate linear models are learned in regions whose
dynamics are more amenable to finite-dimensional
approximation~\cite{GM-MC-XT-TM:19}.
%
%
For heteroclinic connections in plane Couette flow,~\cite{JP-RRK:19}
shows that each invariant solution supports its own Koopman expansion
of the state, valid only up to a crossover point, and that DMD fails
to recover these expansions when the data straddle such a point.  More
recently, Pan and Duraisamy \cite{SP-KD:24} study Koopman lifting and
reconstruction for systems with multiple invariant sets, emphasizing
the role of basin-dependent eigenfunctions and symmetry in
data-efficient learning. These works suggest that multiple Koopman
models are often the appropriate modeling object for systems with
multiple attractors or omega-limit sets, but they do not provide a
general data-driven procedure, with supporting theory, for learning
the underlying omega-limit partition itself.

Another related direction uses Koopman eigenfunctions to identify
basins of attraction and stability boundaries. In particular, Umathe
and Vaidya \cite{BU-UV:23} consider a system with an asymptotically
stable equilibrium and characterize the boundary of its region of
attraction as the union of the stable manifolds of the saddle-type
equilibria lying on this boundary; each such manifold is recovered as
the zero-level set of a Koopman eigenfunction associated with the
corresponding saddle.
%
%
Their goal is thus to recover the separatrix of a single region of
attraction, whereas ours is to recover the full omega-limit partition,
which distinguishes all asymptotic behaviors of the system at once.

Our contributions are as follows. We study autonomous dynamical
systems with possibly multiple omega-limit sets and define the
omega-limit partition, which groups together all initial conditions
whose trajectories share the same omega-limit set.
%
%
Our goal is to approximate it.  Invariant Koopman observables, i.e.,
eigenfunctions with unit eigenvalue, induce stationary partitions
(partitions of the state space into cells that trajectories never
leave)
%
%
through their level sets, and existing approaches exploit this fact by
decomposing the state space using a single invariant observable.  In
practice, however, numerical methods typically return several
approximate invariant observables, and it is unclear how to combine
the information they carry.  Moreover, the quality of these
approximations depends critically on the choice of dictionary, yet
there is little understanding of which dictionaries help and which
degrade the result. To address these issues, we study the relationship
between approximate invariant observables and the omega-limit
partition directly. This analysis yields a principled way to combine
multiple approximate invariants and clarifies the role of the
dictionary.

Specifically, we show that if the dictionary approximates the
indicator functions of the omega-limit cells well, then approximate
invariant observables never place points with different asymptotic
behavior in the same cell, except on a small set. However, a single omega-limit cell may be split into several fragments.
This result is the basis for
the design of our Omega-limit Partitions from Approximate invariants
Learned from data (OPAL) algorithm.  We compute approximate invariant
observables by solving a generalized eigenvalue problem and cluster
the data according to their values, which yields a partition that is
typically too fine. To address this, the remaining task is to decide
which fragments belong together.  We merge two fragments whenever the
data move frequently from one to the other in a single step
(\emph{transition-based merging}), which cannot happen between
distinct omega-limit cells, or whenever their invariant values nearly
coincide (\emph{value-based merging}), since true invariants are
constant on each cell. Finally, because even a handful of
misclassified points near a basin boundary can corrupt the Koopman
models later trained on each cell, we leave points close to the
recovered boundaries unassigned, trading coverage for confidence.
%
%

We then learn a separate Koopman model on each recovered cell.
Numerical examples on systems with multiple equilibria and multiple
limit cycles show that, with suitably chosen parameters, OPAL
classifies every retained point correctly while leaving less than
$2\%$ of the data unassigned. The resulting regional models improve
long-term prediction by an order of magnitude over a single global
model, whereas partitions that are merely nearly correct can produce
diverging regional models.

\section{Preliminaries}

We review\footnote{
  We use the following notation. Let $\real^n$ denote the
  $n$-dimensional Euclidean space. We let $[k]=\{ 1 , \dots, k\}$ for
  brevity. For a Hilbert space $\Hc$ with inner product
  $\langle \cdot, \cdot \rangle_{\Hc}$ and induced norm
  $\|\cdot\|_{\Hc}$, subspaces of observables are denoted by
  calligraphic letters (e.g., $\Sc$). We use $\text{span}(\cdot)$ to
  denote the linear span and $\text{dim}(\cdot)$ to denote
  dimension.} here basics on the
Koopman operator and the partition concepts used throughout the paper.

\subsection{The Koopman Operator}
\label{sec:koopman-prelims}
Following~\cite{AM-YS-IM:20}, consider a discrete-time dynamical
system on the state space $\mathcal{X} \subseteq \mathbb{R}^n$
described by a map $T: \mathcal{X} \to \mathcal{X}$:
\begin{equation}\label{eqn:sys_dynamics}
  x^+ = T(x), \quad x \in \mathcal{X}.
\end{equation}
The Koopman operator $\Kc: \mathcal{F} \to \mathcal{F}$ is an
infinite-dimensional linear operator that acts on a space
$\mathcal{F}$ of real-valued observables of the form
$ \psi: \mathcal{X} \to \mathbb{R}$ by composing them with the
dynamics, i.e., $(\Kc\psi)(x) = \psi(T(x))$. We assume the
function space $\mathcal{F}$ is closed under composition with $T$. We assume the space of observables $\mathcal{F}$ is
equipped with a Hilbert space structure, with inner product
$\langle \cdot, \cdot \rangle_{\mathcal{F}}$ and associated norm
$\|\cdot\|_{\mathcal{F}}$. A common choice is the $L_2$-space with
respect to a probability measure $\mu$ on $\mathcal{X}$. 

Given a dataset of snapshot pairs
$D_{\text{data}} = \{(x_i, y_i)\}_{i=1}^N$, throughout the paper, we
fix $\mu$ to be the empirical measure induced by the data
$D_{\text{data}}$ and take $\mathcal F = L_2(\mu)$. The goal of
data-driven approximation is to find a finite-dimensional operator
$K: \mathcal{S} \to \mathcal{S}$ that best represents the action of
$\mathcal{K}$ when restricted to $\mathcal{S}$. Extended Dynamic Mode
Decomposition (EDMD) \cite{lQL-FD-EMB-IGK:17} provides the optimal
approximation in the $L_2$-sense by orthogonally projecting the image
of the subspace back onto itself. Construct the data matrices
$\Psi(X), \Psi(Y) \in \mathbb{R}^{M \times s}$, where the $i$-th rows
are the evaluations of the dictionary functions $\Psi$ at $x_i$ and
$y_i$ respectively. The EDMD approximation $K_{\text{EDMD}}$ in the
basis $\Psi$ is the solution to the least-squares problem
\begin{align}
  \label{eq:LS-EDMD}
  \min_K \|\Psi(X) K - \Psi(Y)\|_F.
\end{align}

\subsection{Partitions and Stationary Partitions}

Given $\mathcal{X} \subseteq \real^n$, a
\emph{partition}~\cite{IPC-SVF-YGS:82}
%
%
is a family $\Pc=\{C_\alpha \subset \mathcal{X} \}_{\alpha\in A}$ of
disjoint nonempty subsets whose union is $\Xc$. Each $C_\alpha$ is an
\emph{element} (or \emph{cell}) of $\Pc$. A set $D\subseteq\Xc$ is a
\emph{$\Pc$-set} if it is a union of elements of~$\Pc$.  A partition
$\Pc_1$ is \emph{finer} than a partition $\Pc_2$, written
$\Pc_1\succeq\Pc_2$, if every element of $\Pc_2$ is a
$\Pc_1$-set. Equivalently, every $\Pc_2$-set is a $\Pc_1$-set. In this
case, $\Pc_2$ is \emph{coarser} than $\Pc_1$. We omit the partition
index when it causes no confusion.

Assume $\Xc$ is equipped with a $\sigma$-algebra $\Bc$ and a
probability measure $\mu$. 
A measurable partition~\cite{IPC-SVF-YGS:82}
$\mathcal P=\{C_\alpha\}_{\alpha\in A}$ is \emph{stationary} if every
element is forward invariant, i.e., $T(C_\alpha)\subseteq C_\alpha$,
for all $\alpha\in A$.  Note that stationary partitions are not
unique. One can combine cells of a stationary partition to obtain a
coarser stationary partition. For example, if $\{C_1, C_2\}$ is a
stationary partition, then $\{C_1 \cup C_2\}$ is also stationary. In
general, any partition that can be obtained by merging cells of a
stationary partition is itself stationary.

\subsection{Omega-Limit Sets and the Omega-Limit Partition}

We describe here basic properties of omega-limit sets, which play a
key role in the construction of the omega-limit partition, a unique
stationary partition of the state space $\Xc$ that captures the
asymptotic behavior of the system.
%
%
Given the dynamics~\eqref{eqn:sys_dynamics} and $x\in\Xc$, we write
$T^n$ for the $n$-fold composition of $T$ with itself (with $T^0$ the
identity map), so that $\{T^n(x)\}_{n\geq 0}$ is the \emph{trajectory}
of~\eqref{eqn:sys_dynamics} starting at~$x$. An \emph{accumulation
  point} $y\in\Xc$ of the trajectory starting at $x$ is such that
there exists a strictly increasing sequence of nonnegative integers
$\{n_j\}_{j\geq 1}$ with $\lim_{j\to\infty} T^{n_j}(x) = y$. The
\emph{omega-limit set}~\cite[Ch.~8]{SW:03} of $x$, denoted
$\omega(x)$, is the set of all accumulation points of the trajectory
starting at~$x$:
\begin{equation}
  \omega(x)
  :=
  \Big\{y\in\Xc: \exists \, \{n_j\}_{j\geq 1} \text{ s.t. }
    \lim_{j\to\infty} T^{n_j}(x) = y\Big\}. 
\end{equation}
Throughout the paper, we assume that $\Xc$ is a compact metric space
and $T$ is continuous. Under these assumptions, for each $x\in\Xc$,
the set $\omega(x)$ is nonempty, compact, and invariant in the sense
that $T(\omega(x)) = \omega(x)$.  Compactness of $\Xc$ guarantees that
every trajectory has at least one accumulation point, so
$\omega(x)\neq\emptyset$, and continuity of $T$ is what ensures the
invariance identity. Furthermore, since discarding finitely many terms
of a sequence does not change its accumulation points,
$\omega(T^k(x))=\omega(x)$ for every integer $k\geq 0$, without any
further assumption.
%

%


Equipped with the notion of an omega-limit set, one can define a relation
$\sim_\omega$ on $\Xc$ by $x\sim_\omega y \Longleftrightarrow
  \omega(x)=\omega(y)$.
The equivalence class of $x$ is
$R_\omega(x) := \{y\in\Xc:\omega(y)=\omega(x)\}$.  It is
straightforward to verify that $\sim_\omega$ is an equivalence
relation, so the equivalence classes $R_\omega(x)$ form a partition
of~$\Xc$. Indeed, the \emph{omega-limit partition} $\Pc_{\omega}$
of~\eqref{eqn:sys_dynamics} is the collection of equivalence classes: $\mathcal P_\omega
  :=
  \{R_\omega(x):x\in\Xc\}$.

The properties of omega-limit sets imply that $\mathcal P_\omega$ is
stationary. In contrast with the nonuniqueness of stationary
partitions discussed above, $\Pc_\omega$ is canonically determined by
the dynamics: it is the coarsest partition of $\Xc$ for which points
in the same cell share the same omega-limit set.
%
%

\section{Problem Statement}
Consider a discrete-time autonomous dynamical system of the
form~\eqref{eqn:sys_dynamics}.  Unless stated otherwise, $\Xc$ is a
compact metric state space, $T:\Xc\to\Xc$ is continuous, and $\Xc$ is
forward invariant under the dynamics. We assume that the function
space $\Fc = L_2(\mu)$ as defined in \ref{sec:koopman-prelims} is closed under composition with $T$. 

{The system~\eqref{eqn:sys_dynamics} may have multiple omega-limit
sets, corresponding to distinct asymptotic behaviors. This
multiplicity of asymptotic behaviors is a fundamental obstruction to
global finite-dimensional Koopman modeling with continuous
observables.}
%
%
To tackle this, our goal is to learn the omega-limit partition
$\mathcal P_\omega$ from data. The rationale for this is that, once
the cells are known, one can learn Koopman models on each cell, and
these collectively capture the different asymptotic behaviors of the
system. 
%

We assume access to snapshot pairs $\{(x_i, y_i)\}_{i=1}^N$ with
$y_i = T(x_i)$, together with a dictionary of candidate observables
$\Psi = \{\psi_1, \dots, \psi_s: \mathcal{X} \rightarrow \real\}$
spanning a subspace $\Sc$. Our objective is to extract a partition
$\hat{\mathcal P}$ from this data, built from approximate Koopman
invariants, that approximates the true omega-limit
partition~$\mathcal P_\omega$. In constructing this approximation, we want to obtain a partition $\widehat{\Pc}$ as coarse as possible so that fewer Koopman models are needed, while making sure that all distinct asymptotic behaviors are captured. 
%
%

{To do so, we first establish the connection between invariant
observables and the omega-limit partition,
cf. Section~\ref{sec:invariant_observables}. We then build on this
understanding to study how approximate invariants can be employed to
approximate the true omega-limit partition,
cf. Section~\ref{sec:approximate_invariants}. Finally, we put
everything together in an algorithmic procedure, which we term OPAL
(Omega-limit Partitions from Approximate invariants Learned from
data), to learn the omega-limit partition from data, cf.
Section~\ref{sec:post_processing_partition}.}
%
%


\section{Invariant Observable
  Partitions}\label{sec:invariant_observables}

Our first step is to establish the connection between invariant
observables and the omega-limit partition.

\begin{definition}[Invariant observables]
  We refer to the elements of the eigenspace $E_{\text{inv}}(\Kc):=
    \{f\in\Fc:\Kc f=f\}$ of
  the Koopman operator corresponding to the eigenvalue $\lambda=1$ as
  invariant observables of \eqref{eqn:sys_dynamics}. \oprocend
\end{definition}
\smallskip

The following result describes how invariant observables induce
stationary partitions through their level sets.

\begin{proposition}\longthmtitle{Partitions induced by invariant
    observables}
  Let $f\in E_{\text{inv}}(\Kc)$. For $c\in\R$, define the level set
  \begin{equation}
    C_c^f := \{x\in\Xc:f(x)=c\}.
  \end{equation}
  %
  %
  Then each nonempty level set $C_c^f$ is forward
  invariant. Consequently, the collection of nonempty level sets
  $ \{C_c^f : c \in \R\} $ of $f$ is a stationary partition of $\Xc$.
\end{proposition}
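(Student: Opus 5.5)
The argument is a direct pointwise computation from the defining identity $\Kc f = f$, followed by the standard fact that level sets of any real-valued function partition its domain. We read the invariance identity pointwise: $f(T(x)) = (\Kc f)(x) = f(x)$ for every $x \in \Xc$. Since $\mu$ is the empirical measure, equality in $L_2(\mu)$ is literally only $\mu$-a.e., so we will take $f$ to be a representative for which the identity holds everywhere, as the statement implicitly does when it speaks of level sets of $f$. If one prefers, one can say the identity holds on the data points and the conclusion is read there.

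\textbf{Step 1 (forward invariance).} Fix $c$ with $C_c^f \neq \emptyset$ and take $x \in C_c^f$, so $f(x) = c$. Applying the invariance identity gives $f(T(x)) = f(x) = c$, hence $T(x) \in C_c^f$. This shows $T(C_c^f) \subseteq C_c^f$. By induction, $f(T^n(x)) = c$ for all $n \geq 0$, so the whole trajectory stays in the cell.

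\textbf{Step 2 (partition).} The nonempty level sets are pairwise disjoint, because a point $x$ lies only in $C_{f(x)}^f$. They cover $\Xc$, since each $x$ belongs to $C_{f(x)}^f$. Therefore $\{C_c^f\}$ is a partition of $\Xc$. Measurability of each cell follows from measurability of $f$ as an element of $L_2(\mu)$: $C_c^f = f^{-1}(\{c\})$ is the preimage of a Borel set. Combined with Step 1, every cell is forward invariant, so by the definition given earlier the collection is a stationary partition.

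\textbf{Main obstacle.} The only delicate point is the almost-everywhere versus pointwise issue noted above. I would handle it by fixing an everywhere-defined representative satisfying $f \circ T = f$ identically. The rest of the argument is routine.
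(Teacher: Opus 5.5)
Your proof is correct and is essentially the paper's argument: the paper likewise reads $\Kc f = f$ pointwise as $f(T(x)) = f(x)$ to get $T(C_c^f)\subseteq C_c^f$, and then notes that the nonempty level sets are disjoint, cover $\Xc$, and are measurable. The paper passes over the a.e.\ issue you flag without comment. If you want to justify your choice of representative rather than assume it, $\tilde f := \limsup_{n\to\infty} f\circ T^n$ (set to $0$ where it is infinite) agrees with $f$ $\mu$-a.e.\ and satisfies $\tilde f\circ T = \tilde f$ everywhere, using only that $\Kc f = f$ and that $\Kc$ is well defined on $L_2(\mu)$.
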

\begin{proof}
  Let $c\in\R$ and suppose $x\in C_c^f$. Then $f(x)=c$. Since $f$ is
  invariant, $f \circ T (x)=f(x)=c$. Therefore $T(x)\in C_c^f$ and
  $T(C_c^f)\subseteq C_c^f$. The nonempty level sets of any function
  are pairwise disjoint and cover $\Xc$. Since each nonempty level set
  is forward invariant and each $C_c^f$ is a measurable set, the
  collection of nonempty level sets is a stationary partition.
\end{proof}

Families of invariant observables induce partitions through their
joint level sets, as we detail next.

\begin{proposition}\longthmtitle{Partition induced by a family of
    invariant observables}
  Let $\mathcal A\subseteq E_{\text{inv}}(\Kc)$ be a family of invariant
  observables. Define the equivalence relation
  \begin{equation}
    x\sim_{\mathcal A} y
    \; \Longleftrightarrow \;
    f(x)=f(y)\quad\text{for every } f\in\mathcal A.
  \end{equation}
  Let
  $\mathcal P_{\mathcal A}:=\{[x]_{\mathcal A}:x\in\Xc\}$ be the partition induced, where
  $ [x]_{\mathcal A}$ denotes the equivalence class.  Then,
  $\mathcal P_{\mathcal A}$ is stationary.
\end{proposition}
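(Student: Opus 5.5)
The plan is to check three things about $\mathcal P_{\mathcal A}$: that $\sim_{\mathcal A}$ is an equivalence relation (so $\mathcal P_{\mathcal A}$ is a partition), that each class is forward invariant, and that each class is measurable. The first is immediate. Reflexivity, symmetry and transitivity of $\sim_{\mathcal A}$ follow from the same properties of equality of real numbers, applied to each $f\in\mathcal A$. Hence the classes $[x]_{\mathcal A}$ are nonempty, pairwise disjoint, and cover $\Xc$.

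For forward invariance, fix $x\in\Xc$ and take $z\in[x]_{\mathcal A}$. For every $f\in\mathcal A$, invariance gives $f(T(z))=(\Kc f)(z)=f(z)=f(x)$. Therefore $T(z)\sim_{\mathcal A} x$, and so $T([x]_{\mathcal A})\subseteq[x]_{\mathcal A}$. Equivalently, one can note that
\[
  [x]_{\mathcal A}=\bigcap_{f\in\mathcal A} C^f_{f(x)},
\]
where each $C^f_{f(x)}$ is the level set from the preceding proposition. That proposition shows each such level set is forward invariant. An intersection of forward invariant sets is forward invariant, since $T(\bigcap_f C_f)\subseteq\bigcap_f T(C_f)\subseteq\bigcap_f C_f$. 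I would present this intersection form, because it reuses the earlier result directly.

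The main obstacle is measurability, which stationarity requires of a measurable partition. When $\mathcal A$ is uncountable, an arbitrary intersection of measurable level sets need not be measurable. I would handle this using the paper's standing choice of $\mu$ as the empirical measure of the data. On the data, the relevant $\sigma$-algebra can be taken to be the full power set of the finite support, so every class is measurable. Alternatively, one can use that the classes are determined by finitely many distinct value patterns on the support. If a general measure were intended, I would instead take countable $\mathcal A$, or a countable subfamily $\mathcal A_0$ with $\sim_{\mathcal A_0}=\sim_{\mathcal A}$. Each class is then a countable intersection of measurable level sets $f^{-1}(\{f(x)\})$ of measurable functions, and is therefore measurable. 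I would state this caveat briefly, in the same way the preceding proof simply asserts that level sets are measurable.
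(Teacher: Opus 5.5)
Your proposal is correct and follows the paper's proof. In both, $\sim_{\mathcal A}$ is an equivalence relation, and $f(T(z))=f(z)=f(x)$ for every $f\in\mathcal A$ shows each class is forward invariant; your intersection-of-level-sets version is the same argument, and the paper records $[x]_{\mathcal A}=\bigcap_{f\in\mathcal A}C^f_{f(x)}$ right after its proof. The paper's proof never addresses measurability, so your caveat is extra care rather than a gap, though the $\sigma$-algebra must live on $\Xc$ itself, e.g. $\Bc=2^{\Xc}$ for a finitely supported empirical measure, not on the finite support.
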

\begin{proof}
  The relation $\sim_{\mathcal A}$ is an equivalence relation, so its
  equivalence classes form a partition of $\Xc$. Fix a cell
  $[x]_{\mathcal A}$ and let $y\in[x]_{\mathcal A}$. Then, for every
  $f\in\mathcal A$, we have $f(y)=f(x).$ Since each $f\in\mathcal A$
  is invariant: $f \circ T (y)=f(y)=f(x)$.
  Thus $T(y)\in[x]_{\mathcal A}$. Hence each cell is forward
  invariant, so $\mathcal P_{\mathcal A}$ is stationary.
\end{proof}

Since $[x]_{\mathcal A}=\bigcap_{f\in\mathcal A} C^f_{f(x)}$, the
partition $\mathcal P_{\mathcal A}$ is the common refinement of the
partitions of the form $\{C^f_c: c\in\R\}$ induced by the individual
observables $f\in\mathcal A$: it is finer than each of them, and it is
the coarsest such partition.

\section{Approximate invariant observable
  partitions}\label{sec:approximate_invariants}

Here, we turn our attention to the data-driven setting.  We show how
one can obtain a partition from approximate invariant observables
learned from data that, outside what we call an exception set (whose
size is tunable), is finer than the true omega-limit
partition~$\mathcal P_\omega$.

%

Given $\epsilon > 0$, define the set of $\epsilon$-invariant
observables in the subspace~$\Sc$ by
$
  E_{\text{inv}}^{\epsilon}(\Sc) := \{f\in\Sc: \| \Kc f - f \| \leq
  \epsilon \|\Kc f\| \} .
$
The partition induced is $\mathcal P_{\epsilon} := \{[x]_{\epsilon}: x \in \Xc\}$, where
\begin{align}\label{eq:P_epsilon}
  [x]_{\epsilon} := \{y \in \Xc: f(y) = f(x) , \;
  \forall f \in E_{\text{inv}}^{\epsilon}(\Sc)\}.
\end{align}
Unlike the exact invariant-observable partitions of
Section~\ref{sec:invariant_observables}, $\Pc_{\epsilon}$ need not be
stationary because the observables in $E_{\text{inv}}^{\epsilon}(\Sc)$
are only approximately invariant. The next result gives conditions
under which $\Pc_{\epsilon}$ is a finer partition than
$\mathcal P_\omega$ (i.e., $\Pc_{\epsilon} \succeq \mathcal P_\omega$)
outside a small exception set which we quantify precisely in the
statement.
%
%
%

\begin{theorem}\longthmtitle{Approximate invariant observables
    separate omega-limit cells outside a small set}\label{thm:main_result}
  Assume the omega-limit partition $P_\omega$
  of~\eqref{eqn:sys_dynamics} has a finite number of cells
  $ \{C_1,\dots,C_q, \dots, C_t\}$, is measurable and arranged so that
  $\mu(C_i)>0$, for $i \in [q]$ and $\mu(C_i)=0$ for
  $i \in [t] \setminus [q]$.
  %
  %
  Let $\chi_i:=\mathbf{1}_{C_i}$ be the indicator
  function of $C_i$. Assume that for each $i \in [q]$, $\exists$ 
  $s_i\in \Sc$, $\delta_i>0$, $\beta_i \geq 0$ and $\epsilon>0$ s.t.
  \begin{subequations}\label{eq:conditions-dictionary}
    \begin{align}
      \|{s_i-\chi_i}\|_{L_2(\mu)}
      & \leq \delta_i,\
        \|{\Kc(s_i-\chi_i)}\|_{L_2(\mu)} \leq
        \beta_i\delta_i, \label{eqn:approx_indicator}
      \\  
      \beta_i \delta_i
      &< \sqrt{\mu(C_i)}, \,\,
        \frac{(\beta_i+1)\delta_i}
        {\sqrt{\mu(C_i)}-\beta_i\delta_i}
        \leq \epsilon. \label{eq:epsilon_condition}
    \end{align}
  \end{subequations}
  %
  %
  Then the functions $s_i$ are $\epsilon$-invariant observables, i.e.,
  \begin{equation}
    s_i\in E_{\text{inv}}^{\epsilon}(\Sc) \text{ for every } i \in [q].
  \end{equation}
  Furthermore, let $Z:=\bigcup_{i=q+1}^{r} C_i$ be the union of the
  null cells and, for any $\eta\in(0,1/2)$, define the exception set
  $B_\eta := Z \cup \bigcup_{i=1}^q B_i(\eta)$, where
  \begin{equation}
    B_i(\eta) :=\left\{x\in \Xc:|s_i(x)-\chi_i(x)|>\eta\right\}.
  \end{equation}
  Then the measure of the exception set satisfies $\mu(B_\eta) \leq \frac{1}{\eta^2} \sum_{i=1}^q \delta_i^2$. 
  Finally, for all $x,y\in \Xc\setminus B_\eta$, we have
  \begin{equation}\label{eqn:refinement_result}
    f(x)=f(y), \; \forall f\in E_{\text{inv}}^{\epsilon}(\Sc) \,\,
    \Longrightarrow \,\, \omega(x)=\omega(y). 
  \end{equation}
  Equivalently, outside $B_\eta$, the partition $\Pc_{\epsilon}$
  induced by $E_{\text{inv}}^{\epsilon}(\Sc)$ is finer than the
  omega-limit partition~$\mathcal P_\omega$.
\end{theorem}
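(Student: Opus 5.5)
The proof has three parts, matching the three claims of Theorem~\ref{thm:main_result}. The key fact is that each indicator $\chi_i$ is an exact invariant observable. Since $C_i$ is a cell of the stationary partition $\Pc_\omega$ and $\omega(T(x))=\omega(x)$, we have $x\in C_i$ if and only if $T(x)\in C_i$. Hence $\chi_i\circ T=\chi_i$ pointwise, i.e., $\Kc\chi_i=\chi_i$.

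\emph{Step 1 ($\epsilon$-invariance of $s_i$).} Write $s_i=\chi_i+e_i$ with $e_i:=s_i-\chi_i$. Then $\Kc s_i - s_i = \Kc e_i - e_i$, so by the triangle inequality and \eqref{eqn:approx_indicator},
\[
\|\Kc s_i-s_i\|\le \beta_i\delta_i+\delta_i .
\]
For the denominator, $\Kc s_i=\chi_i+\Kc e_i$, and the reverse triangle inequality gives
\[
\|\Kc s_i\|\ge \|\chi_i\|-\|\Kc e_i\|\ge \sqrt{\mu(C_i)}-\beta_i\delta_i>0,
\]
where positivity uses the first inequality in \eqref{eq:epsilon_condition}. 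Dividing the two bounds and applying the second inequality in \eqref{eq:epsilon_condition} gives $\|\Kc s_i-s_i\|\le\epsilon\|\Kc s_i\|$. Since $s_i\in\Sc$, this shows $s_i\in E^{\epsilon}_{\text{inv}}(\Sc)$.

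\emph{Step 2 (measure bound).} Chebyshev's inequality gives $\mu(B_i(\eta))\le \eta^{-2}\|s_i-\chi_i\|^2_{L_2(\mu)}\le \eta^{-2}\delta_i^2$. The null cells satisfy $\mu(Z)=0$ because they are finitely many. Subadditivity then gives $\mu(B_\eta)\le \eta^{-2}\sum_{i=1}^q\delta_i^2$. (The union defining $Z$ should run over $i=q+1,\dots,t$; I read the index $r$ in the statement as $t$.)

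\emph{Step 3 (separation).} Take $x,y\notin B_\eta$ that agree on every $f\in E^{\epsilon}_{\text{inv}}(\Sc)$. By Step 1 they agree in particular on every $s_i$, $i\in[q]$. Because $x\notin Z$, it lies in some positive-measure cell $C_j$ with $j\in[q]$, so $\chi_j(x)=1$. Because $x\notin B_j(\eta)$, we get $s_j(x)>1-\eta>1/2$. Since $s_j(y)=s_j(x)$, also $s_j(y)>1/2$. As $y\notin B_j(\eta)$, we have $|s_j(y)-\chi_j(y)|\le\eta<1/2$, and $\chi_j(y)\in\{0,1\}$. If $\chi_j(y)=0$ we would get $s_j(y)\le \eta<1/2$, a contradiction, so $\chi_j(y)=1$. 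Thus $y\in C_j$, hence $\omega(x)=\omega(y)$, which is \eqref{eqn:refinement_result}. The refinement statement follows directly: each $\Pc_\epsilon$-cell intersected with $\Xc\setminus B_\eta$ lies inside a single $\Pc_\omega$-cell.

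\emph{Main obstacle.} The delicate point is the measure-theoretic bookkeeping in the empirical $L_2(\mu)$ setting. Elements of $L_2(\mu)$ are equivalence classes, but Step 3 evaluates $s_i$ pointwise, so I would treat the dictionary functions as genuine functions on $\Xc$. The identity $\Kc\chi_i=\chi_i$ holds pointwise, which is stronger than needed. With those conventions fixed, the remaining steps are the routine estimates above.
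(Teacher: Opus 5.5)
Your proof is correct and follows the paper's three-part argument essentially step for step: the triangle and reverse-triangle bounds using $\Kc\chi_i=\chi_i$, Markov/Chebyshev plus subadditivity, and separation via the single observable $s_j$. The only differences are that you argue Step 3 directly rather than by contrapositive, you justify $\Kc\chi_i=\chi_i$ from $\omega(T(x))=\omega(x)$ (the paper just asserts it), and you correctly flag the $r$/$t$ index typo in $Z$; one harmless slip is that $x\notin B_j(\eta)$ gives $s_j(x)\ge 1-\eta$ rather than $>$, which still yields $s_j(x)>1/2$.
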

\begin{proof}
  We split the proof into three parts.

  \noindent \emph{Part 1:} (Each approximating indicator is an
  approximate invariant observable)
  Let $i \in [q]$. Since $\chi_i$ is an exact invariant observable,
  $K\chi_i=\chi_i$, we have $ \Kc s_i-s_i
    = \Kc(s_i-\chi_i) - (s_i-\chi_i)$.
  By the triangle inequality and \eqref{eqn:approx_indicator},
  \begin{align*}
    \|{\Kc s_i-s_i}\|_{L_2(\mu)}
    &\leq \|{\Kc(s_i-\chi_i)}\|_{L_2(\mu)} + \|{s_i-\chi_i}\|_{L_2(\mu)}
    \\
    &\leq \beta_i\delta_i + \delta_i = (\beta_i+1)\delta_i.
  \end{align*}
  Next, using $\Kc\chi_i=\chi_i \implies \| \Kc\chi_i\| = \|\chi_i \| = \sqrt{\mu(C_i)}$,
  \begin{align*}
    \|{\Kc s_i}\|_{L_2(\mu)}
    &\geq
      \|{\Kc\chi_i}\|_{L_2(\mu)}
      -
      \|{\Kc(s_i-\chi_i)}\|_{L_2(\mu)} \\
    &\geq
      \sqrt{\mu(C_i)}-
      \beta_i\delta_i.
  \end{align*}
  The right-hand side is positive by
  \eqref{eq:epsilon_condition}. Hence
  \begin{align*}
    \frac{\|{\Kc s_i-s_i}\|_{L_2(\mu)}}
    {\|{\Kc s_i}\|_{L_2(\mu)}}
    &\leq
      \frac{(\beta_i+1)\delta_i} {\sqrt{\mu(C_i)}-\beta_i\delta_i} \leq \epsilon
  \end{align*}
  by \eqref{eq:epsilon_condition}. Therefore
  $s_i\in E_{\text{inv}}^{\epsilon}(\Sc)$.
  
  \noindent \emph{Part 2:} (The exception set has small measure)
  By Markov's inequality~\cite[Thm.~6.17]{GBF:99},
  \begin{align*}
    \mu(\left\{x\in \Xc:|s_i(x)-\chi_i(x)|>\eta\right\})
    &\leq \frac{\|{s_i-\chi_i}\|_{L_2(\mu)}^2}{\eta^2}.
  \end{align*}
  Using~\eqref{eqn:approx_indicator}, we deduce that
  $\mu(B_i(\eta)) \leq \frac{\delta_i^2}{\eta^2}$.  Moreover,
  $\mu(Z)\leq\sum_{i=q+1}^{t}\mu(C_i)=0$.  Since
  $B_\eta=Z\cup\bigcup_{i=1}^q B_i(\eta)$, 
  \[
    \implies \,\, \mu(B_\eta) \leq \mu(Z) + \sum_{i=1}^q \mu(B_i(\eta)) \leq
    \frac{1}{\eta^2}\sum_{i=1}^q \delta_i^2.
  \]

  \noindent \emph{Part 3:} (Separation outside the exception set)
  Let $x,y\in \Xc\setminus B_\eta$. Since $Z\subseteq B_\eta$,
  neither $x$ nor $y$ belongs to a null cell, so
  $x,y\in\bigcup_{i=1}^q C_i$. We prove the contrapositive. Suppose
  $\omega(x)\neq\omega(y)$, so $x$ and $y$ lie in different
  omega-limit cells. This means that there is an index $i\in[q]$
  such that $x\in C_i$ and $y\notin C_i$; in particular, the
  approximating observable $s_i$ exists. Since $x\notin B_\eta$, in
  particular $x\notin B_i(\eta)$, and using $\chi_i(x)=1$, we get
  \begin{equation}
    |s_i(x)-\chi_i(x)|\leq \eta \implies s_i(x)\geq 1-\eta.
  \end{equation}
  Similarly, since $y\notin B_i(\eta)$ and $\chi_i(y)=0$,
  \begin{equation}
    |s_i(y)-\chi_i(y)|\leq \eta \implies s_i(y)\leq \eta.
  \end{equation}
  Since $\eta<1/2$, we have $1-\eta > \eta$, and therefore
  $s_i(x)\neq s_i(y)$.  Since, from Step~1,
  $s_i\in E_{\text{inv}}^{\epsilon}(\Sc)$, this proves the
  contrapositive, i.e., there exists an observable in
  $E_{\text{inv}}^{\epsilon}(\Sc)$ that separates $x$ and $y$.
\end{proof}

{Basically, Theorem~\ref{thm:main_result} says that if the
approximation error \eqref{eq:conditions-dictionary} is small enough
relative to the measure of the cell, then the approximate invariants
form a refinement of the omega-limit partition outside a small
exception set \eqref{eqn:refinement_result}.
As a consequence of Theorem~\ref{thm:main_result}, the cells of
$\Pc_{\epsilon}$ can be merged to recover the coarser omega-limit
partition.}
%
%
Several remarks are in order to interpret the implications of this
result.

%
%

\begin{remark}\longthmtitle{Choice of dictionary functions}
  %
  %
  The result hinges upon the existence of functions $s_i\in\Sc$ that
  approximate the indicator functions $\chi_i$ of the omega-limit
  cells $C_i$ reasonably well, as quantified by the conditions in
  \eqref{eq:conditions-dictionary}.  If these conditions are met, then
  the subspace $\Sc$ contains approximate invariant observables that
  nearly separate those cells. This suggests using dictionaries rich
  enough to approximate piecewise-constant functions, since the target
  indicators are discontinuous across cell boundaries. \oprocend
\end{remark}

\begin{remark}\longthmtitle{Interpretation of the approximation
    conditions} 
  %
  The smallness
  condition is equivalently written as
  \begin{equation*}
    \frac{(\beta_i+1)\delta_i}
    {\sqrt{\mu(C_i)}-\beta_i\delta_i}
    \leq \epsilon \iff     \delta_i
    \leq
    \frac{\epsilon\sqrt{\mu(C_i)}}{\beta_i+1+\epsilon\beta_i}.    
  \end{equation*}
  Thus, for small $\epsilon$, the dictionary must approximate the
  indicator of each omega-limit cell accurately. The approximation
  requirement is stricter for small-measure cells. The measure of the
  exception set can be made arbitrarily small by improving the
  approximation quality of the indicators. \oprocend
\end{remark}

\begin{remark}[Regarding the boundedness of $\Kc$]
  If the Koopman operator $\Kc$ is well defined on $L_2(\mu)$, then one
  can restrict it to the finite-dimensional subspace
  $\Sc + \text{span}\{\chi_i\}$,
  where it is also bounded. The condition
  $\|{\Kc(s_i-\chi_i)}\|_{L_2(\mu)} \leq \beta_i\delta_i$ for $i \in [q]$ can be
  satisfied by choosing $\beta_i$ to be the corresponding operator
  norm.
  %
  %
  \oprocend
\end{remark}

\section{Construction of the approximate
  partition}\label{sec:post_processing_partition}
%
%

This section describes our approach to construct the approximation of
the true omega-limit partition~$\mathcal P_\omega$. We begin by
outlining how to compute a dictionary of approximate invariant
observables from data, cf. Section~\ref{sec:approx-inv-obs}.
Next, we describe how to combine the
cells of $\Pc_\epsilon$  to recover the coarser stationary
partition~$\Pc_{\omega}$. The key point in doing this is that the
elements of $E_{\text{inv}}^\epsilon(\Sc)$ are only approximately
invariant. Consequently, their exact level sets are not expected to be
forward invariant. However, their values should change only mildly
along one-step transitions. {Our post-processing algorithm exploits
this distinction: it first constructs an intermediate empirical
partition from the values of the approximate invariants,
cf. Section~\ref{sec:partition-clustering}, and then merges cells that
are close in invariant-observable value space or connected by one-step transitions, cf.  Section~\ref{sec:merging}.} Finally, we robustify the resulting cells by discarding points near
their boundaries, cf. Section~\ref{sec:tau_robust}. This significantly boosts the classification accuracy of the retained cells.

%
%
%

\subsection{Computing the approximate invariant
  observables}\label{sec:approx-inv-obs}
We first compute the approximate invariant observables
$E_{\text{inv}}^\epsilon(\Sc)$ from data. Given the dictionary
$\Psi=\{\psi_1,\ldots,\psi_s\}$ generating the subspace~$\Sc$, a
candidate observable is
\begin{equation}
  f_c(x)=\Psi(x)c:=\sum_{j=1}^s c_j\psi_j(x),
  \quad c\in\R^s.
\end{equation}
Given snapshot pairs $D_{\text{data}} = \{(x_i,y_i)\}_{i=1}^N$, define
\begin{equation}
  \Psi(X)_{ij}:=\psi_j(x_i),
  \qquad
  \Psi(Y)_{ij}:=\psi_j(y_i).
\end{equation}
Set $M_\Psi := (\Psi(Y)-\Psi(X))^{\top} (\Psi(Y)-\Psi(X))$ and
$N_\Psi := \Psi(Y)^{\top}\Psi(Y)$. Then
\begin{equation}
  \min_{\substack{f_c \in \Sc \\ \Kc f_c \neq 0}} \frac{\|\Kc f_c-f_c\|^2_{L_2(\mu)}}{\|\Kc f_c\|^2_{L_2(\mu)}} = 
  \min_{\substack{c \in \real^s \\ \Psi(Y)c \neq 0}} \frac{c^{\top}M_\Psi c}{c^{\top}N_\Psi c}.
\end{equation}
%
%
Thus, approximate invariants are obtained from the solution of a
generalized eigenvalue problem:
\begin{equation}
  M_\Psi c = \lambda N_\Psi c.
  \label{eq:empirical_GEP}
\end{equation}
%
%
Let $c_1,\ldots,c_r$ be the retained eigenvectors satisfying
$\lambda\leq \epsilon^2$ and define $f_j(x):=\Psi(x)c_j$, for
$j\in[r]$. Then we have
\begin{equation}
  \{f_1,\ldots,f_r\} \in E_{\text{inv}}^\epsilon(\Sc)
\end{equation}
%
%

\subsection{Partition by clustering invariant
  observables}\label{sec:partition-clustering}
%
%

The partition $\Pc_\epsilon$ is defined by equality of all functions
in $E_{\text{inv}}^\epsilon(\Sc)$, cf.~\eqref{eq:P_epsilon}. Our goal
is to create a coarser partition to better
approximate~$\Pc_\omega$. To do this, we begin by clustering the data
points according to the values of the approximate invariant
observables $f_1,\ldots,f_r$. For each $j\in[r]$ and $i \in [N]$, we evaluate $f_j$ on the data $u_i^j:=f_j(x_i)$. 
For a given integer $k_j$, we cluster the scalar values
$\{u_i^j\}_{i=1}^N$ into sets
$\Ic_1^j,\ldots,\Ic_{k_j}^j\subset\R$. The numbers $k_j$ may be
chosen larger than the expected number of omega-limit cells, since
subsequent steps merge the resulting over-refinement.

For each data point $x_i$, define the scalar cluster label
\begin{equation}
  a_j(i)=\ell
  \quad\Longleftrightarrow\quad
  f_j(x_i)\in\Ic_\ell^j.
\end{equation}
The vector of scalar labels of the data point $x_i$ is given by
\begin{equation}
  a(x_i):=(a_1(i),\ldots,a_r(i))
  \in [k_1]\times\cdots\times[k_r]
\end{equation}
%

For a new point $x \in \Xc$, we use the same notation $a(x)$ for the
label vector obtained by evaluating each $f_j(x)$ and assigning it to
the appropriate cluster $\Ic_\ell^j$.

Note that we choose to not explicitly enumerate all
$\prod_{j=1}^r k_j$ possible label vectors as this would exponentially
increase the number of cells. Instead, we keep only the label vectors
that occur in the dataset, $\mathfrak A :=\{a(x_i):i\in[N]\}$.
This has a disadvantage that for a new point $x\in\Xc$, the label
vector $a(x)$ may not belong to $\mathfrak A$. We assume that we
have enough data points so that the measure of the set of points
with label vectors not in $\mathfrak A$ is negligible.  

For $\alpha\in\mathfrak A$, define the cell $C_\alpha :=\{x \in \Xc : a(x)=\alpha\}$. With a slight abuse of notation, we refer to the collection
$\Pc_{\mathfrak{A}}:=\{C_\alpha:\alpha\in\mathfrak A\}$
%
%
as the ``partition'' generated by sampled data. Recall that
$\Pc_\epsilon$ is the partition induced by equality of all functions
in $E_{\text{inv}}^\epsilon(\Sc)$,
cf.~\eqref{eq:P_epsilon}. $\Pc_\epsilon$ is finer than
$\Pc_\mathfrak{A}$, because the latter is defined by clustering the
values of the functions in $E_{\text{inv}}^\epsilon(\Sc)$.  {Also, we
note that $\Pc_\mathfrak{A}$ need not cover the entire set $\Xc$
and hence is not a
true partition of~$\Xc$.}

\subsection{Merging the partition}\label{sec:merging}

%
%

For each occupied cell, let
$ \widehat C_\alpha = \{x_i \in D_{\text{data}} : a(x_i)=\alpha\}$ be
the corresponding empirical cell. Note that $\widehat C_\alpha$ is
different from $C_{\alpha}$ because the former is a collection of the
data points only. Define its empirical weight and embedding centroid
by
\begin{equation}
  w_\alpha:=\frac{|\widehat C_\alpha|}{N}, \,\,     m_\alpha
  :=\frac{1}{| \widehat C_\alpha|} \sum_{x_i\in
    \widehat C_\alpha}F(x_i), 
  \label{eq:centroid_definition}
\end{equation}
where $F(x)=(f_1(x),\ldots,f_r(x))$ is the embedding of $x$ in the
space of approximate invariants. The weight $w_\alpha$ is
measure of $C_\alpha$ under the empirical measure $L_2(\mu)$. 

For a given occupancy threshold $\tau_{\text{occ}}$, cells satisfying
$w_\alpha<\tau_{\text{occ}}$ are treated as low-occupancy cells and
discarded. This is important because low occupancy cells are likely to
be spurious and not representative of the underlying
dynamics. Moreover, retaining them creates problems with the merging
procedure described below, cf. Remark~\ref{remark:occupancy}.

Let $\mathfrak A_{\text{occ}}$ be the set of occupied labels
corresponding to the retained cells, and let
$\Pc_{\mathfrak A_{\text{occ}}} =
\{C_\alpha:w_\alpha\geq\tau_{\text{occ}}\}$ be the corresponding
retained partition. Next, we merge the cells of
$\Pc_\mathfrak{A_{\text{occ}}}$ to create a coarser partition. In
order to do this, we construct a merge graph $\Gc_{\text{merge}}=(\mathfrak A_{\text{occ}},\Ec_{\text{merge}})$ whose nodes are the retained fine cells.  We describe how to add edges
to the merge graph based on transition and value criteria. The final
partition is obtained by taking the connected components of this
graph. In order to visualize the procedure as we present it, we
provide Figure~\ref{fig:merge_schematic} to illustrate the two-stage
merge used in our implementation: transition-based edges on fine
cells, followed by value-based edges between centroids of the
transition-merged components.
%
%
\subsubsection{Transition-based merging}
Because we seek to approximate the $\omega$-limit partition, each
one-step pair $(x_i,y_i)$ is expected to remain within a single
omega-limit cell. For $\alpha,\beta\in\mathfrak A_{\text{occ}}$,
define the transition count
\begin{equation}
  n_{\alpha\beta} := | \, \{i\in[N]:x_i\in  C_\alpha, \ y_i\in
  C_\beta\} |. 
\end{equation}
%
%
%
%
The empirical transition probability from $C_\alpha$ to
$ C_\beta$ is then
$p_{\alpha\beta} :={n_{\alpha\beta}}/{n_\alpha}$, where
$n_\alpha:=|\widehat C_\alpha|$.
If $C_\alpha$ and $C_\beta$ were cells of a stationary partition, then
transitions from one to the other would not occur. Because the
partition $\Pc_{\mathfrak A_{\text{occ}}}$ is built from approximate
invariants, frequent transitions between two cells indicate that the
cells are fragments of the same omega-limit cell. Hence, we add an
edge between $\alpha,\beta$ when
$
\boxed{
\max\{p_{\alpha\beta},p_{\beta\alpha}\}
  \geq \tau_{\text{tr}},}
$
where $\tau_{\text{tr}}>0$ is a transition threshold.

\begin{remark}\longthmtitle{Regarding $\tau_{\text{occ}}$}\label{remark:occupancy}
  The occupancy threshold guards the transition criterion: since
  $p_{\alpha\beta}=n_{\alpha\beta}/n_\alpha$, a cell with very few
  samples can exceed $\tau_{\mathrm{tr}}$ after only one or two
  transitions. Low-occupancy cells that straddle a separatrix
  therefore transition into several basins and can act as
  bridges that merge distinct omega-limit cells. \oprocend
\end{remark}

\begin{figure}[h]
  \centering
  \includegraphics[width=0.9\linewidth]{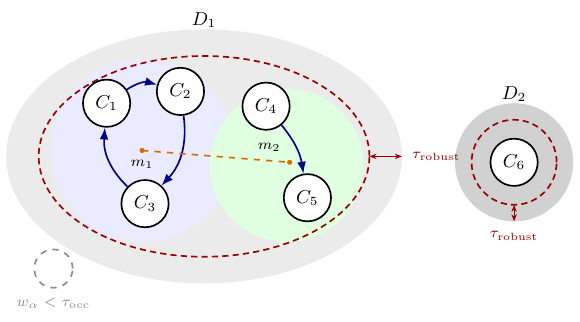}
  \caption{Schematic of the two-stage merge procedure. Fine cells
    $C_1, \dots, C_6$ are first formed by
    clustering approximate invariant observables and dropping low occupancy cells based on $\tau_{\text{occ}}$. The obtained approximate partition is $\{ \widehat D_1, \widehat D_2 \}$.}
  \label{fig:merge_schematic}
\end{figure}

\subsubsection{Value-based merging}
%
%
We also add edges based on the values of the approximate
invariants. If the centroids of two cells are close in
invariant-observable value space, then the two cells may represent the
same omega-limit cell. Indeed, approximate invariants are expected to
be nearly constant on each omega-limit cell, so points in the same
omega-limit cell should have similar invariant-observable
values. Consequently, we add an edge between $\alpha$, $\beta \in \mathfrak{A}_{\text{occ}}$
whenever
$\boxed{\|m_\alpha-m_\beta\|_2
  \leq \tau_{\text{val}}},$
where $m_{\alpha}$ and $m_{\beta}$ are the centroids defined in
\eqref{eq:centroid_definition} and $\tau_{\text{val}}$ is a
user-defined threshold.

\subsection{Boosting classification accuracy}
\label{sec:tau_robust}

Once all edges have been added, we let
$\mathfrak M_1,\ldots,\mathfrak M_L$ be the connected components of
$\Gc_{\text{merge}}$ and define
\begin{equation}
  D_\ell
  :=\bigcup_{\alpha\in\mathfrak M_\ell} C_\alpha,
  \qquad \ell\in[L].
  \label{eq:merged_cell}
\end{equation}
Before returning these cells as the desired approximate partition, we introduce a
parameter $\tau_{\text{robust}}$ that trades off the set of points
the approximate partition covers for confidence in the cells it
returns. We shrink the region covered by the partition to ensure that, outside the exception set, the
partition matches the true omega-limit partition.

The need for this parameter comes from the fact that the scalar
clusters $\Ic_\ell^j$ are learned from data. Their endpoints only
approximate the true boundaries between omega-limit cells, so
misclassifications concentrate in a thin band around those endpoints
at which the merged cell changes, while points deep inside a merged
cell are classified correctly. Since each $D_\ell$ is a union of
cells $C_\alpha$, it induces new scalar clusters along each
coordinate. For $\alpha\in\mathfrak M_\ell$ and $j\in[r]$, let
$\Jc_\alpha^j\subset\R$ be the union of the consecutive scalar
clusters $\Ic_m^j$ containing $\Ic_{\alpha_j}^j$ that are obtained by
varying only the $j$-th entry of $\alpha$ while remaining in
$\mathfrak M_\ell$. In words, $\Jc_\alpha^j$ is the interval of
values of $f_j$ over which the merged cell of $\alpha$ does not
change when the other coordinates are held fixed. We fix
$\tau_{\text{robust}}\in[0,1)$ and let
$\widetilde\Jc_\alpha^j\subset\Jc_\alpha^j$ be the interval obtained
by removing a fraction $\tau_{\text{robust}}$ of the length of
$\Jc_\alpha^j$ from each of its endpoints at which the merged cell
changes. The robustified merged cell is then
\begin{equation}
  \widetilde D_\ell
  :=\bigcup_{\alpha\in\mathfrak M_\ell}
  \{x\in C_\alpha : f_j(x)\in\widetilde\Jc_\alpha^j,\ j\in[r]\}
  \subset D_\ell.
  \label{eq:robust_cell}
\end{equation}
Points in $D_\ell\setminus\widetilde D_\ell$ are not assigned to any
cell. These are the points we are not confident about. Setting
$\tau_{\text{robust}}=0$ recovers $\widetilde D_\ell=D_\ell$. The resulting approximate partition is
\begin{equation}
  \widehat \Pc
  :=\{\widetilde D_1,\ldots, \widetilde D_L\}.
  \label{eq:final_partition}
\end{equation}
%
%
{Overall, the procedure starts from $\Pc_{\epsilon}$ and returns
$\widehat \Pc$, where $\Pc_{\epsilon}$ is finer than $\widehat{\Pc}$.}
This coarsening is designed to preserve distinctions supported by the
joint invariant-observable embedding while removing spurious
fragmentation caused by approximate invariance, numerical noise, and
over-clustering. The entire procedure is summarized in
Algorithm~\ref{alg:post_process}.

%
%

\begin{algorithm}[t]
  \caption{OPAL: Approximating the $\omega$-limit partition}
  \label{alg:post_process}
  \begin{algorithmic}[1]
    \REQUIRE Snapshot pairs $\{(x_i,y_i)\}_{i=1}^N$, dictionary
    $\Psi=\{\psi_1,\ldots,\psi_s\}$, tolerance $\epsilon$, scalar
    cluster counts $k_1,\ldots,k_r$, thresholds $\tau_{\text{occ}}$,
    $\tau_{\text{tr}}$, $\tau_{\text{val}}$, margin
    $\tau_{\text{robust}}$.  \STATE Form $M_\Psi$,
    $N_\Psi$ and solve the GEP \eqref{eq:empirical_GEP}.

    \STATE Select eigenvectors $c_1,\ldots,c_r$ satisfying
    $\lambda\leq\epsilon^2$ and define $f_j(x)=\Psi(x)c_j$ for
    $j\in[r]$.

    \STATE For each $j\in[r]$, cluster $\{f_j(x_i)\}_{i=1}^N$ into
    $\Ic_1^j,\ldots,\Ic_{k_j}^j$.  \STATE For each $x_i$, compute the
    multi-index $a(x_i)=(a_1(i),\ldots,a_r(i))$ induced by the scalar
    clusters.  \STATE Form only the occupied fine cells
    $\widehat C_\alpha=\{x_i:a(i)=\alpha\}$ for
    $\alpha\in\mathfrak A=\{a(i):i\in[N]\}$.  \STATE Compute cell
    weights $w_\alpha=|\widehat C_\alpha|/N$ and embedding centroid
    $m_\alpha$. Retain cells with $w_\alpha\geq\tau_{\text{occ}}$.

    \STATE Build a merge graph with one node per retained fine cell.

    \STATE Add edges $(\alpha,\beta)$ based on
    $\tau_{\text{tr}}$ and $\tau_{\text{val}}$.

    \STATE Compute the connected components
    $\mathfrak M_1,\ldots,\mathfrak M_L$ and the merged cells
    $D_1,\ldots,D_L$ in \eqref{eq:merged_cell}.

    \STATE Form the robustified
    cells $\widetilde D_1,\ldots,\widetilde D_L$ in
    \eqref{eq:robust_cell}.

    \STATE Return the approximate partition
    $\widehat\Pc=\{\widetilde D_1,\ldots,\widetilde D_L\}$.
  \end{algorithmic}
\end{algorithm}

\section{Simulation Results}\label{sec:numerical_examples}
We demonstrate the proposed approach on two systems with multiple
asymptotic
behaviors\footnote{\url{https://github.com/dhruvshah0208/OPAL-Koopman}.}. All
simulations are performed in Python 3.11.4 on a machine with an Apple
M1 Pro chip and 16 GB of RAM. The simulations run in less than a
minute including the generalized eigenvalue problem and the
post-processing steps.

\begin{table}[htb!]
  \centering
  \small
  \begin{tabular}{@{}lcc@{}}
    \toprule
    & Duffing & Bistable osc. \\
    \midrule
    Trajectories $\times$ time steps & $500\times 100$ & $500\times 50$ \\
    Initial conditions & $[-2,2]^2$ & $[0,4]\times[0,2\pi]$ \\
    Snapshot pairs $N$ & $5\times 10^4$ & $2.5\times 10^4$ \\
    \midrule
    \multicolumn{3}{@{}l}{\emph{Partition (Algorithm~\ref{alg:post_process})}} \\
    RBF width $\gamma$ & $30$ & $5$ \\
    Polynomial degree & $0$ & $3$ \\
    Tolerance $\epsilon$ & $10^{-4}$ & $10^{-4}$ \\
    Cluster counts $(k_1,k_2)$ & $(2,4)$ & $(1,4)$ \\
    $\tau_{\mathrm{occ}}$ & $2\times 10^{-3}$ & $8\times 10^{-4}$ \\
    $\tau_{\mathrm{tr}}$ & $5\times 10^{-3}$ & $0.1$ \\
    $\tau_{\mathrm{val}}$ & $0$ & $6\times 10^{-3}$ \\
    $\tau_{\mathrm{robust}}$ & $0.05$ & $0.2$ \\
    Training snapshots dropped & $1.8\%$ & $0.84\%$ \\
    \midrule
    \multicolumn{3}{@{}l}{\emph{Regional and vanilla EDMD}} \\
    RBF centers & $100$ & $100$ \\
    RBF width $\gamma$ & $1$ & $1$ \\
    Polynomial degree & $6$ & $6$ \\
    Prediction horizon & $400$ & $300$ \\
    Validation trajectories & $47$ & $97$ \\
    \bottomrule
  \end{tabular}
  \caption{Data, partition, and EDMD parameters for the two
    examples. {Validation trajectories are those with correctly
    classified initial conditions used for the rollout errors.}}
  \label{tab:parameters}
\end{table}

\begin{figure*}[htb!]
  \centering
  \begin{subfigure}[t]{0.49\linewidth}
    \centering
    \includegraphics[width=.9\linewidth]{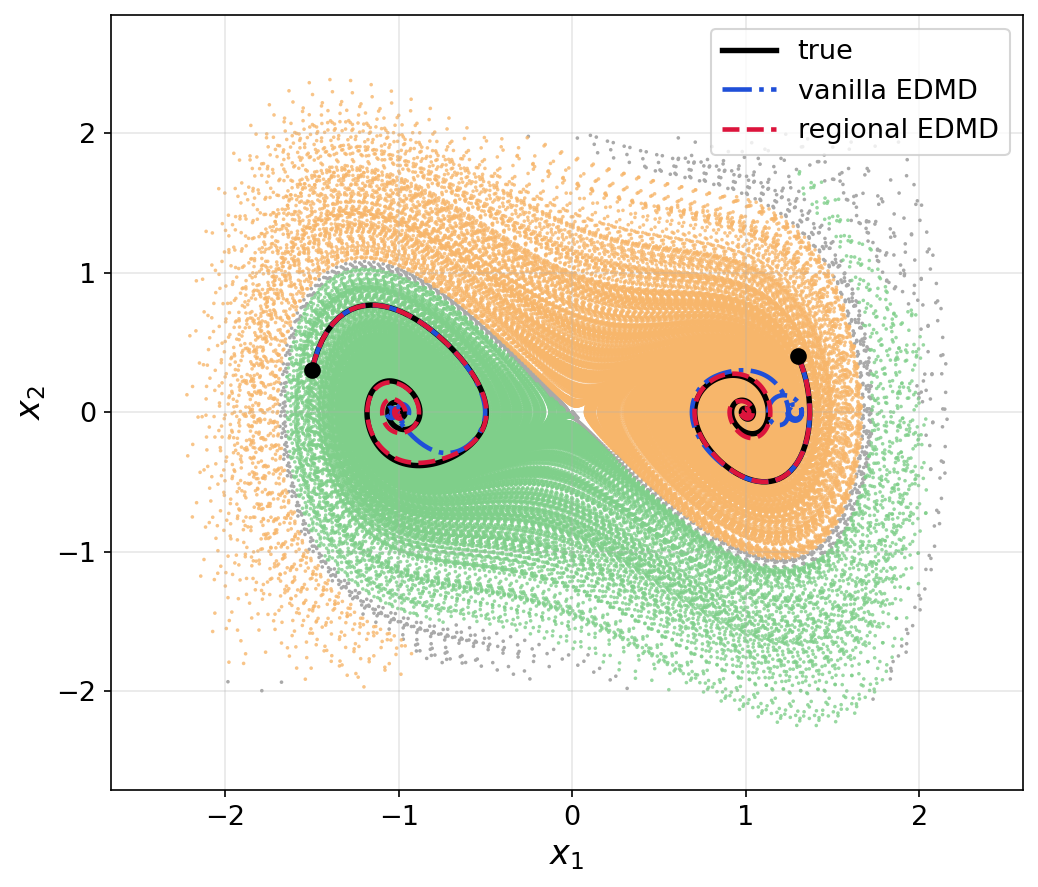}
    \caption{Duffing oscillator}
    \label{fig:duffing_switched}
  \end{subfigure}
  \hfill
  \begin{subfigure}[t]{0.49\linewidth}
    \centering
    \includegraphics[width=.9\linewidth]{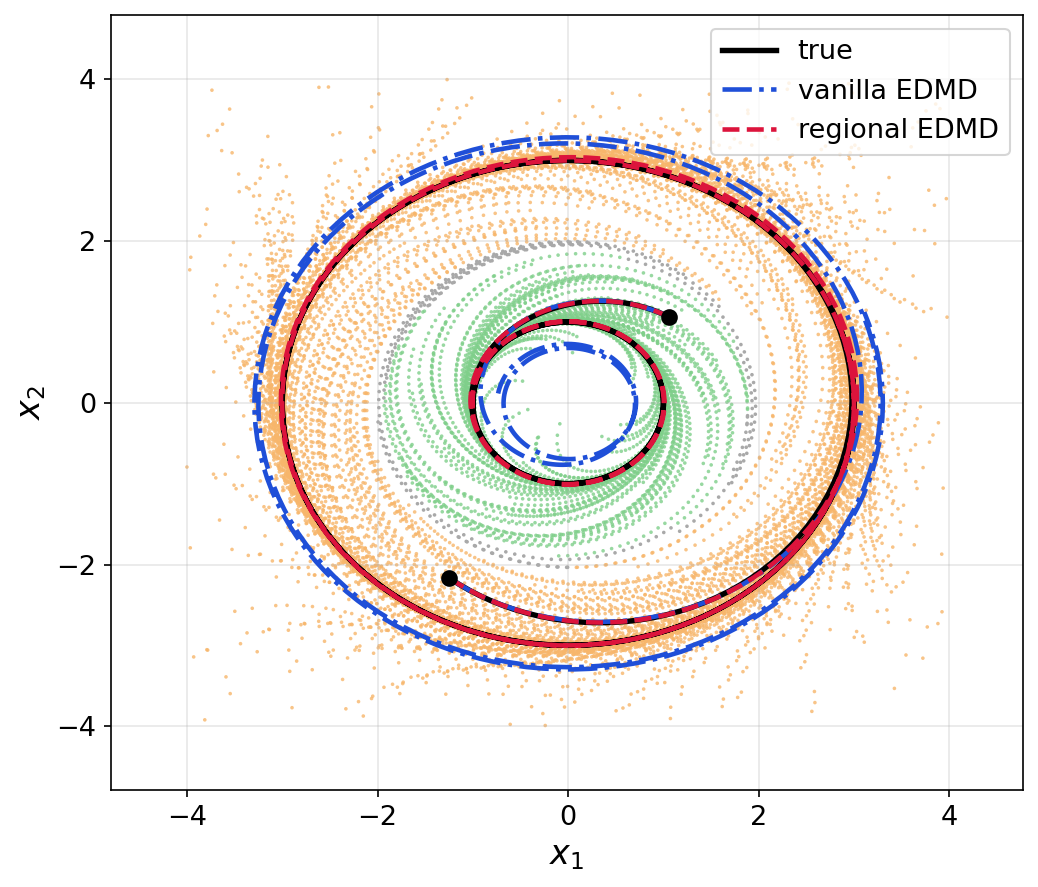}
    \caption{Bistable oscillations}
    \label{fig:bistable_switched}
  \end{subfigure}
  \caption{Recovered $\omega$-limit partition and multiple regional
    Koopman prediction for (a) the Duffing oscillator and (b) the
    bistable oscillations. Snapshots are colored (orange and green)
    based on the robustified cell $\widetilde D_\ell$; gray points are
    dropped by $\tau_{\mathrm{robust}}$. True (black), regional EDMD
    (red, dashed), and vanilla EDMD (blue, dash-dotted) trajectories
    start from two initial conditions.}
  \label{fig:phase_portraits}
\end{figure*}

\subsection{Setup}

Consider the damped Duffing oscillator
\begin{equation}
  \dot{x_1} = x_2,
  \,\,
  \dot{x_2} = -\delta x_2 + \alpha x_1 - \beta x_1^3,
\end{equation}
with parameters $\delta = 0.5$, $\alpha = 1.0$, and $\beta = 1.0$. The
system has two stable equilibria at $(\pm\sqrt{\alpha/\beta},0)$ and
an unstable equilibrium at $(0,0)$, so trajectories converge to
one of two equilibrium points.

Next, consider the bistable oscillations model
\begin{equation}
    \dot{r} = (r - 2) - (r - 2)^3,
    \,\,
    \dot{\theta} = 1.
\end{equation}
This system has two stable limit cycles at $r = 1, 3$ and an unstable
limit cycle at $r = 2$, so trajectories converge to one of two
disjoint $\omega$-limit sets. Initial conditions are drawn uniformly
in polar coordinates $(r,\theta)$; we convert to Cartesian coordinates
for data collection and partition learning, which makes the problem
harder.

In both examples, we simulate $500$ trajectories from random initial
conditions, sampling every $\Delta t=0.05$, to obtain one-step
snapshot pairs. To recover the partition
(Algorithm~\ref{alg:post_process}), we use a dictionary of $s=500$
Gaussian RBFs $\phi(x)=\exp(-\gamma\|x-c\|^2)$ with k-means centers
$c$, augmented by polynomials up to a prescribed degree. All
parameters are listed in Table~\ref{tab:parameters}.
Elements of the obtained partition $\widehat \Pc$ are visualized in
Figure~\ref{fig:phase_portraits} and Table~\ref{tab:validation}
reports the resulting classification accuracy on validation snapshots.

Each cell of the obtained partition $\widehat \Pc$ is then used to
build a regional EDMD model fit on the snapshots classified according
to the cell label, using a separate dictionary of RBF centers
subsampled from the training pool and polynomials. As a baseline, we
also fit a single (vanilla) EDMD model with the same dictionary on all
snapshot pairs, without using the
partition. Figure~\ref{fig:rollout_errors} shows the relative rollout
error on validation trajectories and Table~\ref{tab:validation}
reports the mean relative rollout error at selected time steps.

\begin{figure}[htb!]
  \centering
  \begin{subfigure}[t]{0.49\linewidth}
    \centering
    \includegraphics[width=\linewidth]{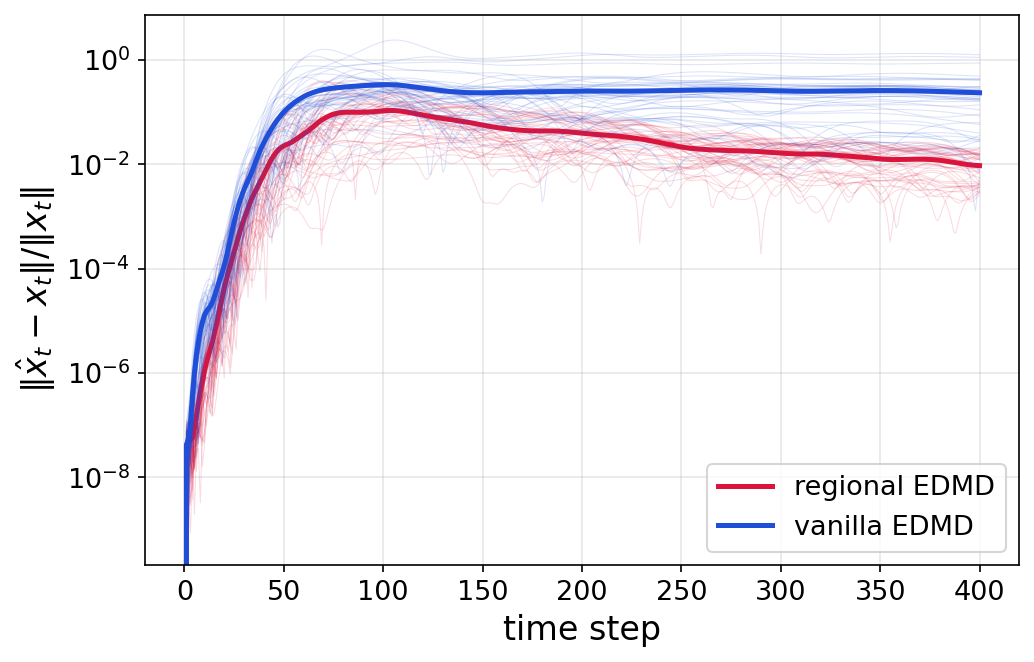}
    \caption{Duffing oscillator}
    \label{fig:duffing_errors}
  \end{subfigure}
  \hfill
  \begin{subfigure}[t]{0.49\linewidth}
    \centering
    \includegraphics[width=\linewidth]{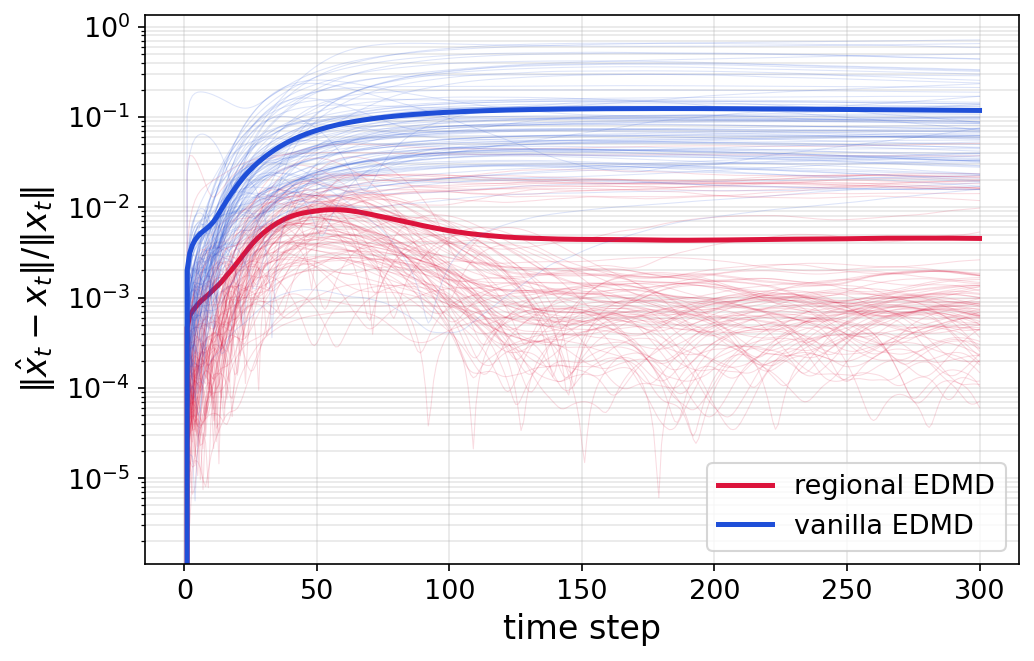}
    \caption{Bistable oscillations}
    \label{fig:bistable_errors}
  \end{subfigure}
  \caption{Relative multi-step rollout error for (a) the Duffing oscillator and
    (b) the bistable oscillations. Thin lines are
    individual trajectories and thick lines the mean over
    trajectories for the regional (red) and vanilla (blue) EDMD
    models.}
  \label{fig:rollout_errors}
\end{figure}
%
%

\begin{table}[htb!]
  \centering
  \small
  \begin{tabular}{@{}lllrr@{}}
    \toprule
    System & Model & Accuracy & $t\!=\!50$ & $t\!=\!300$ \\
    \midrule
    \multirow{3}{*}{Duffing} & regional & 100\% (0.42\%) & $22.6$ & $16.4$ \\
    & regional$^\ast$ & 99.86\% (0\%) & \multicolumn{2}{c}{diverged ($24$ of $50$)} \\
    & vanilla & --- & $100.0$ & $255.0$ \\
    \midrule
    \multirow{3}{*}{Bistable osc.} & regional & 100\% (0.23\%) & $9.2$ & $4.5$ \\
    & regional$^\ast$ & 99.90\% (0\%) & $18.8$ & $6.1$ \\
    & vanilla & --- & $71.3$ & $118.0$ \\
    \bottomrule
  \end{tabular}
  \caption{Classification accuracy of the recovered $\omega$-limit
    partition on the classified validation snapshots, with the
    fraction dropped by $\tau_{\mathrm{robust}}$ in parentheses. Other
    columns report mean relative state error ($\times 10^{-3}$)
    {averaged over several trajectories at time step $t$ for correctly
      classified initial conditions.} Rows marked $^\ast$ repeat the
    regional model with the robustness margin disabled
    ($\tau_{\mathrm{robust}}=0$).}
\label{tab:validation}
\end{table}

\begin{remark}[Parameter Tuning]
  We increase $\gamma$ until further increases no longer yield
  additional approximate invariant observables, i.e., until the number
  of eigenvectors
  %
  %
  of the GEP \eqref{eq:empirical_GEP} with eigenvalue at most
  $\epsilon^2$ stops growing.
  %
  %
  We then choose the scalar clustering integers $k_1,\ldots,k_r$. For
  the merging thresholds, we tune the transition threshold
  $\tau_{\mathrm{tr}}$ before using value-based merging. Specifically,
  we first set $\tau_{\mathrm{val}}=0$ so that merging is driven only
  by observed transitions between fine cells. We then vary
  $\tau_{\mathrm{tr}}$. Smaller values of $\tau_{\mathrm{tr}}$ produce
  more aggressive merging and, in the extreme, may collapse the
  partition to a single cell. Starting from a small value, we
  gradually increase $\tau_{\mathrm{tr}}$ until persistent cells
  appear and the resulting partition is approximately stationary on
  the data. After $\tau_{\mathrm{tr}}$ is fixed, the value-based
  threshold $\tau_{\mathrm{val}}$ can be used as a secondary
  refinement parameter.  \oprocend
\end{remark}

\subsection{Observations}

From Figure~\ref{fig:rollout_errors} and comparing the regional vs vanilla rows of Table~\ref{tab:validation}, we see an order of magnitude improvement in predictive performance when using regional EDMD models. In fact, the vanilla EDMD model has a large steady state error which can be attributed to trying to satisfy distinct asymptotic behaviors with one model. This is clearly visible in the phase portraits of Figure \ref{fig:phase_portraits} as well. 

Table \ref{tab:validation} demonstrates that correctly tuned $\tau_{\text{robust}}$ values lead to perfect accuracy for the confident samples. With
$\tau_{\mathrm{robust}}=0$, the partition still classifies more than
$99.8\%$ of the validation snapshots correctly, but the few
misclassified pairs near the boundary contaminate the regional EDMD models. For the Duffing oscillator, just $0.14\%$ of the fit data give one local operator an unstable mode,
and $24$ of the $50$ rollouts diverge even though their initial
conditions are correctly classified. This demonstrates that other naive approaches to building $\omega$-limit partitions might succeed in giving almost accurate partitions but in order to create useful regional EDMD models, one needs perfect partitions albeit outside an exception set of the original state space. 

\section{Conclusions}
Multiple omega-limit sets obstruct a single continuous
finite-dimensional Koopman model, but not a collection of them. OPAL
recovers the omega-limit partition from data, and our analysis explains
why this is possible: outside a small exception set, approximate
invariants err only by splitting cells, so merging fragments suffices.
Partition quality crucial: regional EDMD models improve
long-horizon prediction by an order of magnitude, yet a partition that
is $99.8\%$ correct can yield diverging models. Important open directions
include automatic dictionary and threshold selection for higher-dimensional
systems, performance guarantees for Algorithm \ref{alg:post_process}, and the integration of recovered
partitions into control design with Koopman representations.

\bibliographystyle{ieeetr}
\bibliography{../bib/alias,../bib/Main-add,../bib/Main,../bib/JC}

\newpage

\end{document}